\documentclass[10pt,conference]{IEEEtran}
\IEEEoverridecommandlockouts

\usepackage{amssymb}
\usepackage[cmex10]{amsmath}
\usepackage{stfloats}
\usepackage{graphicx}
\usepackage{subfigure}
\usepackage{tabularx}
\usepackage{epsfig,epsf,color,balance,cite}
\usepackage{verbatim}
\usepackage{url}
\usepackage{bm}
\usepackage{booktabs}
\usepackage[ruled,linesnumbered]{algorithm2e}

\newtheorem{lemma}{\bf Lemma}
\newtheorem{proof}{Proof}

\usepackage{geometry}
\usepackage{color}
\definecolor{myc1}{rgb}{0,0,0}

\begin{document}

\title{ 
Energy Efficiency Maximization for FAS-Assisted Downlink Communication in Mobile Embodied AI Networks (MEAN) over Interference Channels
}

\author{
\IEEEauthorblockN{
Ruopeng Xu$\IEEEauthorrefmark{1}$,
Zhaohui Yang$\IEEEauthorrefmark{1}$,
Jiaxiang Wang$\IEEEauthorrefmark{2}$,
Chenliang Wu$\IEEEauthorrefmark{1}$,
Zhaoyang Zhang$\IEEEauthorrefmark{1}$
}
	\IEEEauthorblockA{
			$\IEEEauthorrefmark{1}$College of Information Science and Electronic Engineering, Zhejiang University, Hangzhou, China\\
            $\IEEEauthorrefmark{2}$Department of Engineering, King's College London, London, UK\\
          	E-mails:
(ruopengxu, yang\_zhaohui, chenliangwu, ning\_ming)@zju.edu.cn, jiaxiang.wang@kcl.ac.uk, 
		}
\thanks{This work was supported by the National Natural Science Foundation of China (NSFC) under Grants 62394292, 62394290.}
\vspace{-3em}
}
\maketitle

\maketitle

\begin{abstract}
In this paper, we investigate a fluid antenna system (FAS)-assisted downlink mobile embodied AI network (MEAN) over interference channels, where multiple base station (BS)-agent pairs reuse the same spectrum. The BSs employ FASs to improve the communication quality, while the mobile embodied artificial intelligence (AI) agents can adjust their positions according to environment-aware channel information, such as a channel-to-interference-plus-noise map (CINM). Considering both co-channel interference and the energy consumption caused by communication and agent movement, we formulate an energy efficiency (EE) maximization problem by jointly optimizing the agent positions, FAS port selections, and transmit powers. To solve this mixed-integer non-convex problem, we first derive the optimal transmit power in closed form for given agent positions and FAS ports. We then develop an iterative algorithm with adaptive FAS-port optimization and sequential agent-position optimization, together with a low-complexity power-update method. Simulation results demonstrate that the proposed design outperforms the considered benchmark schemes and provides improved feasibility under severe noise conditions.
\end{abstract}

\begin{IEEEkeywords}
Fluid antenna system (FAS), mobile embodied artificial intelligence networks (MEAN), interference channel
\end{IEEEkeywords}
\IEEEpeerreviewmaketitle

\section{Introduction}
The development of artificial intelligence (AI) has booted a series of communication paradigms in sixth generation (6G) communication\cite{xu2026new,11159303,wang2025generative,yang2023energy}. As one of them, mobile embodied AI networks (MEAN) enable intelligent embodied agents to interact with physical environments while communicating with base stations (BSs) for information exchange and task execution\cite{wu2026joint}. Different from conventional communication terminals with fixed locations, embodied agents can actively adjust their positions according to environmental and communication conditions. Such mobility provides additional spatial degrees of freedom for improving wireless communication qualities, especially when channel information at different locations can be obtained from environment-aware databases such as channel knowledge map (CKM)\cite{zeng2021toward} and channel gain map (CGM)\cite{sun2025channel}. However, moving an agent to a position with better communication conditions also introduces additional movement energy consumption. Therefore, an energy-efficient design requires a trade-off between the communication gain obtained from agent movement and its associated energy cost.

Fluid antenna systems (FASs) provide another promising approach to exploiting spatial channel variations\cite{zhang2026finite,11511867,wong2020fluid,11586940}. By dynamically selecting among multiple preset antenna ports, an FAS can adapt its effective channel while requiring only a single RF chain, showing a promising potential to improve communication performance and energy efficiency (EE)\cite{xu2026fluid,xu2026performance}. When FAS is introduced into an MEAN, both port selection and agent mobility can be used to improve communication quality. Such a design is relevant to scenarios, such as a smart manufacturing case, where multiple BSs simultaneously deliver control commands and task data to mobile embodied agents. To improve spectrum utilization, different BS-agent pairs may reuse the same frequency band, which inevitably introduces co-channel interference among receivers. Consequently, each agent receives not only the desired signal from its associated BS but also interference from other BSs. In this case, changing the FAS port of one BS affects not only its desired link but also the interference experienced by other agents. Meanwhile, agent movement further changes both the desired and interference channels. Therefore, agent positions, FAS port selections, and BS transmit powers become coupled.

Existing studies have investigated environment-aware wireless communication using channel information\cite{wu2026joint,zeng2021toward,sun2025channel}, as well as FAS-enabled communication designs\cite{zhang2026finite,11511867,wong2020fluid,11586940,xu2026fluid,xu2026performance}. However, jointly exploiting agent mobility and FAS spatial flexibility in an MEAN over interference channels remains a research gap. Moreover, transmit powers in the communication design are supposed to be coordinated with both agent movement and FAS port selection strategies to satisfy quality of service (QoS) of communications while maintaining high EE, which requires a joint design of agent positions, FAS port selections, and transmit powers. Motivated by the above observations, in this paper, we investigate an FAS-assisted downlink MEAN  with multiple BS-agent pairs sharing the same spectrum. We formulate an EE maximization problem that jointly optimizes the agent-position assignment, FAS port selection, and BS transmit power under the corresponding constraints.

The key contributions of this paper include:
\begin{itemize}
    \item We investigate an FAS-assisted downlink communication in MEAN over interference channels, where multiple BS-agent pairs share the same spectrum. In the considered model, the communication system accounts for the co-channel interference among multiple BS-agent pairs, as well as the additional movement energy consumed by agents when seeking better communication conditions. This establishes a joint movement and communication design framework that captures the trade-off between communication performance improvement and mobility-induced energy consumption.
    \item We formulate an EE maximization problem by jointly optimizing agent positions, FAS port selections, and transmit powers. To solve this problem, we first derive the optimal transmit power vector in closed form for given agent positions and FAS ports. Based on this result, we develop an iterative optimization algorithm that adaptively performs exhaustive or alternating FAS-port optimization according to the size of the port-selection space, and sequentially optimizes the agent positions. In addition, a low-complexity power-update method based on the Sherman–Morrison identity is developed to reduce the computational complexity of position optimization.
    \item We evaluate the proposed FAS-assisted MEAN design through extensive Monte Carlo simulations and compare its EE performance with several benchmark schemes under different numbers of FAS ports, transmission data sizes, and noise power levels. Simulation results demonstrate that the proposed scheme consistently achieves the best EE performance and provides improved feasibility under severe noise conditions, validating the effectiveness of the proposed algorithm.
\end{itemize}

\section{System Model}

As illustrated in Fig.~\ref{SystemModel}, we consider an interference channel with $K$ BS-agent pairs, where each pair consists of one BS equipped with a fluid antenna (FA) and an agent utilizing a fixed-position antenna (FPA) served by the BS. To improve spectrum utilization, we consider a frequency-reuse transmission scenario, in which all BS-agent pairs share the same spectrum band. 
During the transmission, we assume that agents can freely move within the region $\mathcal{D}$, size of $d_1\times d_2$, where channel conditions vary between different positions. The channel-to-interference-plus-noise map (CINM) can provide channel information, similar to the CKM, CGM, etc., plus the information about interference and noise. With the help of CINM, each agent can adjust its spatial position to find better channels to receive signals from its associated BS.

To describe spatial relations, we introduce two-dimensional (2D) Cartesian coordinate system, where the locations of the $k$-th BS and the $k$-th agent are $\mathbf{b}_k = [x_{b_k},y_{b_k}]^T$ and $\mathbf{a}_{k} = [x_{a_k},y_{a_k}]^T$, respectively, and $(\cdot)^T$ is the transpose operation.

\subsection{FAS-Aided Communication Model}
For the $k$-th BS-agent pair, we assume that the $k$-th BS is equipped with a 2D FAS, which includes only one RF chain and $N_k=N_k^1\times N_k^2$ preset ports, where the $N_k^i$ ports are uniformly distributed along a linear space of length $W_k^i\lambda$ for $i\in\{1,2\}$, and take up a grid surface size of $W_k=W_k^1\times W_k^2\lambda^2$ with $\lambda$ being the carrier wavelength. For simplicity of notation, for the $(n_1,n_2)$-th port, we can note it as the $l$-th port, with the mapping relationship $l=(n_1-1)N_k^2+ n_2$.

Based on this, we denote channel between the $l$-th port of FAS belonging the $k$-th BS and the $j$-th agent by
\begin{equation}
   g_{k,j}^l=\sqrt{\alpha_{k,j}}h_{k,j}^l,
\end{equation}
where $h_{k,j}^l $ stands for small-scale fading \cite{11193779}, and $\sqrt{\alpha_{k,j}}$ denotes large-scale fading, which can be defined as\cite{laourine2008capacity}
\begin{equation}
   \alpha_{k,j} = C_0\frac{\psi_{k,j}}{||\mathbf{b}_k-\mathbf{a}_j||_2^\beta}, \forall j,k \in \mathcal{K},
\end{equation}
where $C_0$ is the large-scale reference constant, $\psi_{k,j}$ is an independent and identically distributed (i.i.d.) Gamma random variable, $\beta$ is the path-loss exponent with typical values ranging from $2$ to $6$\cite{yang2015role}, and $\mathcal{K} =\{1,\dots,K\}$. 

We exploit the spatial correlation between FAS ports to model $h_{k,j}^l$, which can be described as
\begin{align}\label{SpatialCorrMatrix}
    \begin{aligned}
      \mathbf{R}_{k,j} = \mathbb{E}[\mathbf{h}_{k,j}\mathbf{h}_{k,j}^H]
      =\begin{bmatrix}
      1 & R_{k,j}^{1,2} & \dots & R_{k,j}^{1,N_k} \\
      R_{k,j}^{2,1} & 1 & \dots & R_{k,j}^{2,N_k} \\
      \vdots & \vdots & &  \vdots\\
      R_{k,j}^{N_k,1} & R_{k,j}^{N_k,2} & \dots & 1 \\
      \end{bmatrix},
    \end{aligned}
\end{align}
where $\mathbf{h}_{k,j} = [h_{k,j}^1,\dots,h_{k,j}^{N_k}]^T$, and entry $R_{k,j}^{x,y}$ is the spatial correlation between the $x$-th port and the $y$-th port of FAS. In particular, the spatial correlation between the $(n_1,n_2)$-th port and $(\tilde{n}_1,\tilde{n}_2)$-th port can be mathematically described as\cite{10303274}
\begin{align}\label{R_jk}
\nonumber
    &R_{k,j}^{(n_1,n_2),(\tilde{n}_1,\tilde{n}_2)}\\
    &=j_0\left(2\pi\sqrt{\left(\frac{|n_1-\tilde{n}_1|}{N_k^1-1}W_k^1\right)^2+\left(\frac{|n_2-\tilde{n}_2|}{N_k^2-1}W_k^2\right)^2}\right),
\end{align}
where $j_0(\cdot)$ is the spherical Bessel function of the first kind. 

\begin{figure}[t]
\centering
\includegraphics[width=1\linewidth]{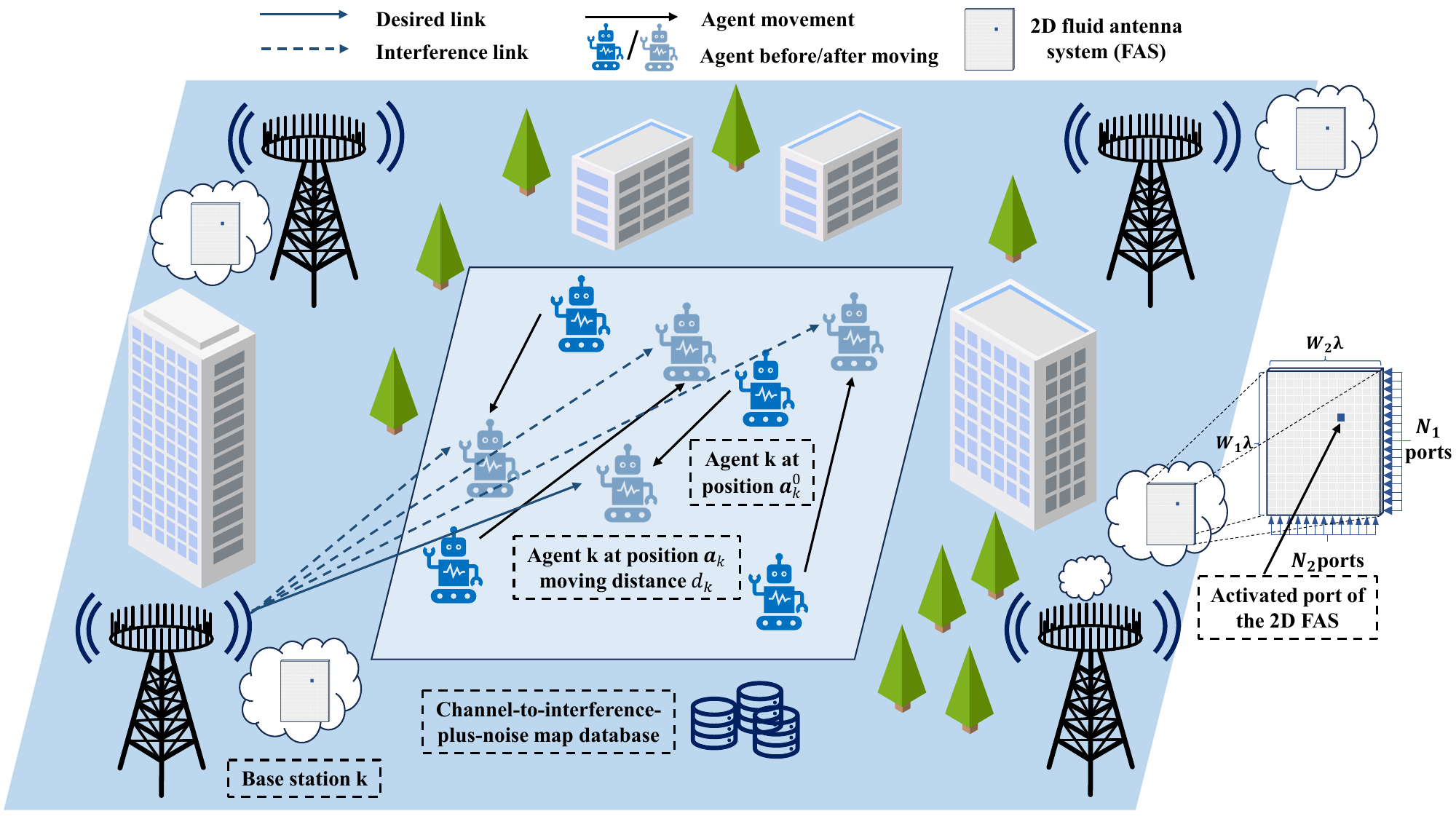}
\caption{An Illustration of the proposed communication system.} 
\label{SystemModel}
\vspace{-0.4cm}
\end{figure}

The aim for the agent $k$ is to receive the target signal from the $k$-th BS, but it also receives interference signals from other BSs. Thus, the received signal at agent $k$ can be given as
\begin{equation}\label{Downlink}
    y_{k} = \sqrt{p_k \alpha_{k,k}} \mathbf{e}_k^H\mathbf{h}_{k,k} s_k + \sum_{j=1, j \neq k}^K \sqrt{p_j \alpha_{j,k}} \mathbf{e}_j^H\mathbf{h}_{j,k} s_j + \eta_k,
\end{equation}
where $\mathbf{e}_k$ is a one-hot vector for BS $k$ to choose the FAS port, $p_k$ is the transmit power, $s_k \sim \mathcal{CN} (0,1)$ is the transmitted signal of BS $k$ following complex Gaussian distribution with a zero mean and a unit variance, and $\eta_k$ $\sim$ $\mathcal{CN}$($0$, $\sigma^2$) is the additive white Gaussian noise (AWGN) in agent $k$. Then, the signal-to-interference-plus-noise ratio ($\mathrm{SINR}$) of the $k$-th agent's received signal at position $\mathbf{a}_k$ can be expressed as
\begin{align}\label{downlink gamma}
    \begin{aligned}
    \gamma_{k} =\frac{p_k  \alpha_{k,k} |\mathbf{e}_k^H\mathbf{h}_{k,k}|^2}{\sum_{j=1, j \neq k}^K  p_j \alpha_{j,k} |\mathbf{e}_j^H\mathbf{h}_{j,k}|^2 + \sigma^2} 
\end{aligned},
\end{align}
based on which, the achievable rate for the $k$-th BS-agent pair can be given as
\begin{equation}
    R_k = B\log_2(1+\gamma_k),
\end{equation}
where $B$ is the bandwidth shared by all BS-agent pairs.

\subsection{Agent Movement Energy Model}

Although each agent can move to find better channels to receive signals, moving from the start point to its destination also introduces extra energy consumption beyond communication. From the perspective of EE, there is a trade-off between the improvement of achievable rate and the extra system energy consumption, both caused by the agent movement. 

To quantify the trade-off, we first model the energy consumption caused by agent movement. If we denote the initial and target positions of agent $k$ as $\mathbf{a}_k^0$ and $\mathbf{a}_k$, respectively, the moving distance of agent $k$ can be denoted by $d_k(\mathbf{a}_k, \mathbf{a}_k^0)$. We can discretize the feasible area $\mathcal{D}$ for agents to move into $M$ candidate points. Specifically, we denote the Cartesian coordinate of the $m$-th position as $\bm{\mu}_m = [x_{\mu_m}, y_{\mu_m}]^T$. Then, the target position of the agent $k$ can be expressed as $\mathbf{a}_k = \sum_{m=1}^M x_{m,k} \bm{\mu}_m$,  where we define $\mathbf{X} = [\mathbf{x}_1,\dots,\mathbf{x}_K]$, and $\mathbf{x}_k = [x_{1,k},\dots,x_{M,k}]^T,\forall k \in \mathcal{K}$.

With an initial position and a chosen target position, agents can invoke tools, such as LLMs, to obtain the route plans. We assume a planar and obstacle-free region\footnote{This assumption allows us to focus on the coupling between agent movement and wireless resource allocation. In environments with obstacles, the Euclidean distance can be replaced by the exact route planning results without altering the overall optimization framework.}, where each agent can move at a constant speed of $v$ directly from its initial location to any candidate position along a feasible straight-line path.
Then, the movement energy consumption can be modeled as\cite{topic2019neural}
\begin{equation}\label{Movement Energy}
    E_k^{\mathrm{move}}= (\tau_r  +\tau_a v^2+ \frac{\tau_c}{v} )d_k(\mathbf{a}_k, \mathbf{a}_k^0)\overset{(a)}{\geq}\tau_0 ||\mathbf{a}_k - \mathbf{a}_k^0||_2 ,
\end{equation}
where $\tau_r$, $\tau_a$, and $\tau_c$ are constant coefficients related to the agent material and the environment\cite{topic2019neural}, $\tau_0 = \tau_r+\frac{3}{2^{2/3}}(\tau_a\tau_c^2)^{1/3}$, and $||\cdot|| $ is the Euclidean norm. The equality $(a)$ holds because $E_k^{\mathrm{move}}$ is a convex function of $v$, and it reaches its minimum value when $v^* = (\frac{\tau_c}{2\tau_a})^{1/3}$.

As a result, we can describe the EE performance after the agent $k$ moves from position $\mathbf{a}_k^0$ to the target position $\mathbf{a}_k$ as 
\begin{align}\label{EE}
    \begin{aligned}
          \mathrm{EE}_k =\frac{B\mathrm{log}_2(1+\gamma_{k})t_{k}}{{\tau_0 ||\sum_{m=1}^M x_{m,k} \bm{\mu}_m - \mathbf{a}_k^0||_2} + p_kt_{k}}, 
    \end{aligned}
\end{align}
where $t_{k} = L_k/R_k$ is the transmission duration, and $L_k$ is the total amount of data to be transmitted.

\subsection{Problem Formulation}
 Then, we can formulate the EE maximization problem as 
\begin{subequations}\label{sys1max0}
    \begin{align} 
       \mathop{\max}_{\mathbf{p},\mathbf{X},\mathbf{E}}
       \nonumber \quad
       &\sum_{k=1}^K \frac{B\mathrm{log}_2(1+\gamma_{k})t_{k}}{{\tau_0 ||\sum_{m=1}^M x_{m,k} \bm{\mu}_m - \mathbf{a}_k^0||_2} + p_kt_{k}},\tag{\ref{sys1max0}}\\
         \textrm{s.t.} \qquad 
         & x_{m, k} \in \{0,1\}, \forall k \in\mathcal{K}, \forall m \in \mathcal{M},\\
         & \sum_{m=1}^M x_{m,k} = 1, \forall k\in \mathcal{K}, \\
         & \sum_{k=1}^K x_{m,k} \leq 1, \forall m \in \mathcal{M}, \\
         & {e}_{n,k} \in \{0,1\},  \forall n \in \mathcal{N}_k, \forall k \in \mathcal{K},\\
         &\sum_{n=1}^{N_k} {e}_{n,k} = 1,\forall k \in \mathcal{K}, \\
         & B\mathrm{log}_2(1+\gamma_{k}) \geq R_k^{\mathrm{min}}, \forall k \in \mathcal{K},\\
         & 0 < p_k \leq P_k^{\mathrm{max}}, \forall k \in \mathcal{K},
    \end{align}
\end{subequations}
where $\mathcal{M} = \{1,\dots,M\}$, $\mathcal{N}_k = \{1,\dots,N_k\}$, $N_\mathrm{max} = \max_{k \in \mathcal{K}}\{N_k\}$, $\mathbf{E} = [\hat{\mathbf{e}}_1,\dots,\hat{\mathbf{e}}_K]$, $\hat{\mathbf{e}}_k = [\mathbf{e}_k^T,\mathbf{0}^T_{N_{\mathrm{max}}-N_k}]^T$, $\mathbf{0}_{a}$ is a zero vector of size $a\times1$, $R_k^{\mathrm{min}}$ is the minimum required achievable rate for the $k$-th BS-agent pair to ensure transmission QoS, and $P_k^{\mathrm{max}}$ is the maximum transmit power for the BS $k$. 

\section{Proposed Algorithm}

\subsection{Optimal Power Control for Fixed Movement and FAS Ports}

Given the agent-position assignment matrix $\mathbf X$, and the FAS port selection matrix $\mathbf E$, problem \eqref{sys1max0} is reduced to a transmit power control problem, which can be reformulated as
\begin{subequations}\label{sub1}
    \begin{align} 
       \mathop{\max}_{\mathbf{p}}
       &\sum_{k=1}^K \frac{L_k B\mathrm{log}_2(1+\frac{p_k |g_{k,k}|^2}{\sum_{j=1, j \neq k}^K  p_j|g_{j,k}|^2 + \sigma^2})}{{\tau_0 d_k B\mathrm{log}_2(1+\frac{p_k |g_{k,k}|^2}{\sum_{j=1, j \neq k}^K p_j |g_{j,k}|^2 + \sigma^2}}) + p_k L_k} ,\tag{\ref{sub1}}\\
         \textrm{s.t.} \ 
         & B\mathrm{log}_2(1+\frac{p_k |g_{k,k}|^2}{\sum_{j=1, j \neq k}^K p_j |g_{j,k}|^2 + \sigma^2} ) \geq R_k^{\mathrm{min}}, \forall k \in \mathcal{K},\\
         & 0 < p_k \leq P_k^{\mathrm{max}}, \forall k \in \mathcal{K},
    \end{align}
\end{subequations}
where $g_{j,k} = \sqrt{\alpha_{j,k}}\mathbf{e}_j^H \mathbf{h}_{j,k}$ is the effective channel from BS $j$ to the agent $k$. 

Based on $(\ref{sub1}\mathrm{a})$, we can obtain the following inequality as
\begin{equation}
p_k \geq \sum_{j=1,j\ne k}^{K}
\frac{\theta_k |g_{j,k}|^2}{|g_{k,k}|^2}p_j
+
\frac{\theta_k\sigma^2}{|g_{k,k}|^2},
\quad \forall k \in \mathcal{K}.
\end{equation}
where $\theta_k = 2^{R_k^{\min}/B}-1$. Furthermore, the above inequalities can be written as
$\mathbf{p}\geq\mathbf{F}\mathbf{p}+\mathbf{u}$,
where the entries in $\mathbf{F}$ hold $F_{k,k}=0$,
$F_{k,j}=\theta_k|g_{j,k}|^2/|g_{k,k}|^2$ for $j\neq k$,
and the entries in $\mathbf{u}$ can be given by $u_k=\theta_k\sigma^2/|g_{k,k}|^2$.


For fixed $\mathbf{X}$ and $\mathbf{E}$, we can easily prove that the objective function of the problem \eqref{sub1} is strictly decreasing with respect to each transmit power component over feasible region. Therefore, the optimal power vector is the minimum feasible power vector satisfying the QoS constraints. and if $\rho(\mathbf F)<1$, where $\rho(\cdot)$ denotes the spectral radius of a matrix, the minimum feasible power vector is given by
\begin{equation}\label{pOpt}
\mathbf p^\star =
(\mathbf I-\mathbf F)^{-1}\mathbf u,
\end{equation}
According to the Perron-Frobenius theorem\cite{120145}, since $\mathbf F$ is entry-wise nonnegative, $\rho(\mathbf F)<1$ guarantees that $\mathbf I-\mathbf F$ is non-singular and that its inverse is also entry-wise nonnegative. Consequently, we can obtain the result shown in \eqref{pOpt}. If $\rho(\mathbf F)\geq 1$ or any entry of $\mathbf p^\star$ exceeds $P_k^{\max}$, the current movement and FAS port selection are considered infeasible.

\subsection{Problem Reformulation and Port Selection Strategies}
With the results shown in \eqref{pOpt}, the original problem \eqref{sys1max0} can be reformulated as
\begin{subequations}\label{sys1max1}
    \begin{align} 
       \mathop{\max}_{\mathbf{X},\mathbf{E}}
       \nonumber \quad
       &\Phi(\mathbf{X},\mathbf{E}) = \sum_{k=1}^K \frac{L_k}{{\tau_0 d_k(\mathbf{X})} + \frac{L_k}{R_k^{\mathrm{min}}}p^\star_k(\mathbf{X},\mathbf{E})},\tag{\ref{sys1max1}}\\
         \textrm{s.t.} \qquad 
        & \rho(\mathbf{F}(\mathbf{X},\mathbf{E})) < 1,\\
        & 0 < p^\star_k(\mathbf{X},\mathbf{E}) \leq P_k^{\mathrm{max}}, \forall k \in \mathcal{K},\\
        \nonumber
        &(\ref{sys1max0}\mathrm{a})-(\ref{sys1max0}\mathrm{e}),
    \end{align}
\end{subequations}
where $d_k(\mathbf{X}) = ||\sum_{m=1}^M x_{m,k} \bm{\mu}_m - \mathbf{a}_k^0||_2$, $p^\star_k(\mathbf{X},\mathbf{E}) = [(\mathbf I-\mathbf F(\mathbf{X},\mathbf{E}))^{-1}\mathbf u(\mathbf{X},\mathbf{E})]_k$ and $[\cdot]_k$ is the $k$-th entry of a vector.

For a given agent-position assignment $\mathbf{X}$, the remaining FAS-port selection problem is a finite discrete optimization problem. Specifically, the total number of possible combinations of port selections is $N_{\mathrm{comb}} = \Pi_{k=1}^K N_k$. We adopt an adaptive search strategy according to the size of this discrete search space. When $N_{\mathrm{comb}} \leq N_{\mathrm{th}}$, where $N_{\mathrm{th}}$ denotes a predefined computational-budget threshold, all possible port selections are exhaustively evaluated, and the globally optimal port selection for a given $\mathbf{X}$ is obtained as
\begin{equation}\label{ESearch}
    \mathbf{E}^\star(\mathbf{X}) = \arg\ \max_{\mathbf{E}} \Phi(\mathbf{X},\mathbf{E}),
\end{equation}
where infeasible combinations that violate the constraint (\ref{sys1max1}a) or the constraint (\ref{sys1max1}b) will be discarded.

When $N_{\mathrm{comb}} > N_{\mathrm{th}}$, an exhaustive search becomes computationally prohibitive. We therefore optimize the FAS-port selections of different BSs in an alternating manner. Let $t=1,2,\dots$ denote the index of the alternating iteration, and let $\mathbf{E}^{(t)}$ denote the port-selection matrix obtained after completing the $t$-th iteration. Given $\mathbf{E}^{(t-1)}$, the BSs are sequentially updated. When updating BS $k$ in the $t$-th iteration, the vectors $\{\mathbf{e}_j^{(t)}\}_{j<k}$ and $\{\mathbf{e}_j^{(t-1)}\}_{j>k}$ are fixed, while $\mathbf{e}_k$ is optimized over all its feasible one-hot configurations. Mathematically, we can update $\mathbf{e}_k^{(t)}$ by solving the following problem as
\begin{subequations}\label{subE}
    \begin{align} 
      \mathbf{e}_k^{(t)} = \arg \mathop{\max}_{\mathbf{e}_k}
       \nonumber 
       &\Phi(\mathbf{X},[\mathbf{e}_1^{(t)},\dots,\mathbf{e}_{k-1}^{(t)},\mathbf{e}_k,\mathbf{e}_{k+1}^{(t-1)},\dots,\mathbf{e}_K^{(t-1)}]) ,\tag{\ref{subE}}\\
         \textrm{s.t.} \qquad 
        \nonumber
        &(\ref{sys1max0}\mathrm{d}),(\ref{sys1max0}\mathrm{e}),(\ref{sys1max1}\mathrm{a}),(\ref{sys1max1}\mathrm{b}).
    \end{align}
\end{subequations}
When port selections do not change between two iterations, i.e., $\mathbf{E}^{(t)} = \mathbf{E}^{(t-1)}$, we find the solution with the given $\mathbf{X}$.

\subsection{Agent Position Optimization for Given Port Selections}
With fixed $\mathbf{E}$, directly enumerating all feasible position assignments requires a combinatorial search over $M!/(M-K)!$ possibilities and is therefore computationally prohibitive for a large $M$. Thus, we optimize the position of each agent sequentially, following a Gauss–Seidel update manner similar to the port optimization above. During the $t$-th iteration, we update the position of the agent $k$ by solving
\begin{subequations}\label{subX}
    \begin{align} 
      \mathbf{x}_k^{(t)} = \arg \mathop{\max}_{\mathbf{x}_k}
       \nonumber 
       &\Phi([\mathbf{x}_1^{(t)},\dots,\mathbf{x}_{k-1}^{(t)},\mathbf{x}_k,\mathbf{x}_{k+1}^{(t-1)},\dots,\mathbf{x}_K^{(t-1)}],\mathbf{E}) ,\tag{\ref{subX}}\\
         \textrm{s.t.} \qquad 
        \nonumber
        &(\ref{sys1max0}\mathrm{a})-(\ref{sys1max0}\mathrm{c}),(\ref{sys1max1}\mathrm{a}),(\ref{sys1max1}\mathrm{b}).
    \end{align}
\end{subequations}
in which an exhaustive search can be performed to find the solution but with a large amount of matrix inversion operations required if directly recomputing \eqref{pOpt} for every candidate position with a large size of $M$. Therefore, we propose the Lemma~\ref{lemma1} to reduce computational cost power control update. 
\begin{lemma}\label{lemma1}
    Let $\mathbf{X}_{k \rightarrow{m}}$ denote the assignment obtained by relocating agent $k$ to candidate position $m$ while keeping all other agents fixed. Correspondingly, we define $\mathbf{F}_{k \rightarrow{m}} = \mathbf{F}(\mathbf{X}_{k \rightarrow{m}},\mathbf{E})$ and $\mathbf{u}_{k \rightarrow{m}} = \mathbf{u}(\mathbf{X}_{k \rightarrow{m}},\mathbf{E})$. Then, the updated power control vector $\mathbf{p}^\star_{k \rightarrow{m}}$ can be calculated by

    \begin{equation}\label{pStarUpdate}
        \mathbf{p}^\star_{k \rightarrow{m}} = \mathbf{p}^\star(\mathbf{X}, \mathbf{E}) + \frac{[\mathbf{Q}^{-1}]_{:,k}(\delta_{k \rightarrow{m}}-\bm{\xi}_{k \rightarrow{m}}^T \mathbf{p}^{\star})}{1+ \bm{\xi}_{k \rightarrow{m}}^T [{\mathbf{Q}}^{-1}]_{:,k}},
    \end{equation}
where $\mathbf{Q}=\mathbf{I}-\mathbf{F}(\mathbf{X},\mathbf{E})$, $\bm{\xi}_{k \rightarrow m}^T = [\mathbf{I}-\mathbf{F}_{k \rightarrow{m}}]_{k,:} - [\mathbf{Q}]_{k,:}$, $\delta_{k \rightarrow{m}} = [\mathbf{u_{k \rightarrow{m}}}]_k - [\mathbf{u}(\mathbf{X},\mathbf{E})]_k$, and $[\cdot]_{k,:}$ and $[\cdot]_{:,k}$ denote the $k$-the row and $k$-th column of a matrix, respectively.

\end{lemma}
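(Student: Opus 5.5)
The plan is to show that relocating a single agent changes $\mathbf I-\mathbf F$ only by a rank-one term and $\mathbf u$ only in one entry. The Sherman--Morrison identity then gives \eqref{pStarUpdate} directly. Throughout, I assume the current assignment $(\mathbf X,\mathbf E)$ is feasible, so $\rho(\mathbf F)<1$, and by the Perron--Frobenius argument after \eqref{pOpt}, $\mathbf Q=\mathbf I-\mathbf F$ is invertible and $\mathbf p^\star=\mathbf Q^{-1}\mathbf u$. I also assume $\mathbf I-\mathbf F_{k\to m}$ is nonsingular; otherwise candidate $m$ is discarded as infeasible anyway.

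\textbf{Step 1 (which entries depend on agent $k$).} By definition, $F_{i,j}=\theta_i|g_{j,i}|^2/|g_{i,i}|^2$ and $u_i=\theta_i\sigma^2/|g_{i,i}|^2$. The effective channel $g_{j,i}=\sqrt{\alpha_{j,i}}\mathbf e_j^H\mathbf h_{j,i}$ depends on the agent positions only through $\mathbf a_i$, because $\alpha_{j,i}$ involves $\|\mathbf b_j-\mathbf a_i\|_2$. Hence row $i$ of $\mathbf F$ and entry $u_i$ involve only channels received at agent $i$. When only agent $k$ moves, with $\mathbf E$ and all other $\mathbf x_j$ fixed, the following holds:
\begin{itemize}
\item only the $k$-th row of $\mathbf F$ changes;
\item only the $k$-th entry of $\mathbf u$ changes.
\end{itemize}
Writing $\mathbf e^{(k)}$ for the $k$-th canonical basis vector (to avoid a clash with the port vectors), this gives
\begin{equation*}
\mathbf I-\mathbf F_{k\to m}=\mathbf Q+\mathbf e^{(k)}\bm\xi_{k\to m}^T,\qquad \mathbf u_{k\to m}=\mathbf u+\delta_{k\to m}\mathbf e^{(k)},
\end{equation*}
with $\bm\xi_{k\to m}$ and $\delta_{k\to m}$ exactly as in the statement of Lemma~\ref{lemma1}. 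This is the step that needs care. One must check that the column-$k$ entries $F_{j,k}$ for $j\neq k$ do not change, since they involve the channel from BS $k$ to agent $j$, not agent $k$'s position. Without this, the perturbation would be rank two.

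\textbf{Step 2 (Sherman--Morrison).} Let $\mathbf q_k=[\mathbf Q^{-1}]_{:,k}=\mathbf Q^{-1}\mathbf e^{(k)}$. By the matrix determinant lemma,
\begin{equation*}
\det(\mathbf Q+\mathbf e^{(k)}\bm\xi^T)=\det(\mathbf Q)\,(1+\bm\xi^T\mathbf q_k).
\end{equation*}
Nonsingularity of $\mathbf I-\mathbf F_{k\to m}$ therefore ensures $1+\bm\xi^T\mathbf q_k\neq 0$. Sherman--Morrison then yields
\begin{equation*}
(\mathbf I-\mathbf F_{k\to m})^{-1}=\mathbf Q^{-1}-\frac{\mathbf q_k\bm\xi^T\mathbf Q^{-1}}{1+\bm\xi^T\mathbf q_k}.
\end{equation*}

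\textbf{Step 3 (apply to the new $\mathbf u$).} Multiplying by $\mathbf u+\delta\mathbf e^{(k)}$ and using $\mathbf Q^{-1}\mathbf u=\mathbf p^\star$ gives
\begin{equation*}
\mathbf p^\star_{k\to m}=\mathbf p^\star+\delta\mathbf q_k-\frac{\mathbf q_k\,\bm\xi^T(\mathbf p^\star+\delta\mathbf q_k)}{1+\bm\xi^T\mathbf q_k}.
\end{equation*}
Put the $\mathbf q_k$ terms over the common denominator. The numerator becomes $\delta(1+\bm\xi^T\mathbf q_k)-\bm\xi^T\mathbf p^\star-\delta\bm\xi^T\mathbf q_k=\delta-\bm\xi^T\mathbf p^\star$, so the $\delta\bm\xi^T\mathbf q_k$ terms cancel. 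This is exactly \eqref{pStarUpdate}.

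As a remark, once $\mathbf Q^{-1}$ is known, each candidate $m$ costs only $O(K)$ operations for $\bm\xi$, $\delta$, and the inner products, plus $O(K)$ for the vector update. This replaces an $O(K^3)$ inversion per candidate.
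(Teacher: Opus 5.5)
Your proof is correct and follows the paper's argument. Write $\mathbf I-\mathbf F_{k\to m}=\mathbf Q+\bm\epsilon_k\bm\xi_{k\to m}^T$ and $\mathbf u_{k\to m}=\mathbf u+\delta_{k\to m}\bm\epsilon_k$, invert with Sherman--Morrison, and simplify so the $\delta\,\bm\xi^T\mathbf Q^{-1}\bm\epsilon_k$ terms cancel. Your Step~1 check that only row $k$ of $\mathbf F$ and entry $k$ of $\mathbf u$ depend on $\mathbf a_k$, and your use of the matrix determinant lemma to tie $1+\bm\xi_{k\to m}^T[\mathbf Q^{-1}]_{:,k}\neq 0$ to invertibility of $\mathbf I-\mathbf F_{k\to m}$, make explicit what the paper only asserts or assumes.
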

\begin{proof}
    See Appendix~\ref{proof1}.
\end{proof}

Based on \eqref{pStarUpdate}, the optimal power response associated with each candidate position can be evaluated without recomputing a matrix inverse. Moreover, for each position $m$, the candidate assignment is feasible if $0 < [\mathbf{p}^\star_{k \rightarrow{m}}]_j \leq P^{\mathrm{max}}_j,\forall j\in \mathcal{K}$. 
Among all feasible candidates, agent $k$ selects the position yielding the largest value of \eqref{subX}.  After all K agents are sequentially updated, one position-optimization sweep is completed. The procedure is repeated until $\mathbf{X}^{(t)} = \mathbf{X}^{(t-1)} $.

\subsection{Overall Algorithm and Complexity Analysis}


\begin{algorithm}[t]
\caption{The Proposed Optimization Algorithm for Solving Problem \eqref{sys1max0}}
\label{alg:overall}
\KwIn{Feasible $\mathbf{p}^{(0)}$, $\mathbf{X}^{(0)}$ and $\mathbf{E}^{(0)}$,
threshold $N_{\mathrm{th}}$}
\KwOut{$\mathbf{X}^{\star}$, $\mathbf{E}^{\star}$, and $\mathbf{p}^{\star}$}

Initialize $r=0$ and obtain $\mathbf{p}^{\star}$ according to
\eqref{pOpt}\;

\Repeat{$\mathbf{X}^{(r)}=\mathbf{X}^{(r-1)}$ $\mathrm{and}$
        $\mathbf{E}^{(r)}=\mathbf{E}^{(r-1)}$}{

    Fix $\mathbf{X}^{(r)}$\;

    \eIf{$N_{\mathrm{comb}}\leq N_{\mathrm{th}}$}{
        Obtain $\mathbf{E}^{(r+1)}$ by the exhaustive search in
        \eqref{ESearch}, with $\mathbf{p}^{\star}$ evaluated by
        \eqref{pOpt}\;
    }{
        Initialize $\mathbf{E}^{(r+1)}=\mathbf{E}^{(r)}$\;

        \Repeat{the FAS-port selection converges}{
            Sequentially update $\mathbf{e}_k$ according to
            \eqref{subE}, and update $\mathbf{p}^{\star}$ by
            \eqref{pOpt}\;
        }
    }

    Fix $\mathbf{E}^{(r+1)}$\;

    Initialize $\mathbf{X}^{(r+1)}=\mathbf{X}^{(r)}$\;

    \Repeat{the position assignment converges}{
        Sequentially update $\mathbf{x}_k$ according to
        \eqref{subX}, and update $\mathbf{p}^{\star}$
        using \eqref{pStarUpdate} in Lemma~\ref{lemma1}\;
    }

    $r\leftarrow r+1$\;
}

\textbf{return}
$\mathbf{X}^{\star}=\mathbf{X}^{(r)}$,
$\mathbf{E}^{\star}=\mathbf{E}^{(r)}$, $\mathrm{and}$ $\mathbf{p}^{\star}$\;
\end{algorithm}

The overall optimization algorithm is summarized in Algorithm~\ref{alg:overall}.
Let $I_E$, $I_X$, and $I_O$ denote the numbers of FAS-port iterations,
position-assignment iterations, and outer iterations required for convergence,
respectively. For each FAS-port selection, evaluating
$\mathbf{p}^{\star}$ through \eqref{pOpt} requires solving a
$K\times K$ linear system, whose complexity is $\mathcal{O}(K^3)$.
Hence, the complexity of the FAS-port optimization is
\begin{equation}
C_E =
\begin{cases}
\mathcal{O}\!\left(N_{\mathrm{comb}}K^3\right),
& N_{\mathrm{comb}}\leq N_{\mathrm{th}},\\[1mm]
\mathcal{O}\!\left(
I_E K^3\sum_{k=1}^{K}N_k
\right),
& N_{\mathrm{comb}}>N_{\mathrm{th}}.
\end{cases}
\end{equation}
For the agent position optimization, by exploiting the results in Lemma~\ref{lemma1}, each candidate can be evaluated using $\mathcal{O}(K)$ vector operations. Considering all $M$ candidate positions for $K$ agents together, the complexity of one position optimization iteration is $C_X=\mathcal{O}\!\left(MK^2\right)$. Therefore, the overall computational complexity of Algorithm~\ref{alg:overall} is $\mathcal{O}\!\left[I_O\left(C_E+I_XMK^2
\right)\right]$.

\section{Simulation Results}

\begin{figure*}[t]
    \centering
            \vspace{-1.5em}
    \subfigure[]{
        \begin{minipage}{0.33\textwidth}
            \centering
            \includegraphics[width=1\textwidth]{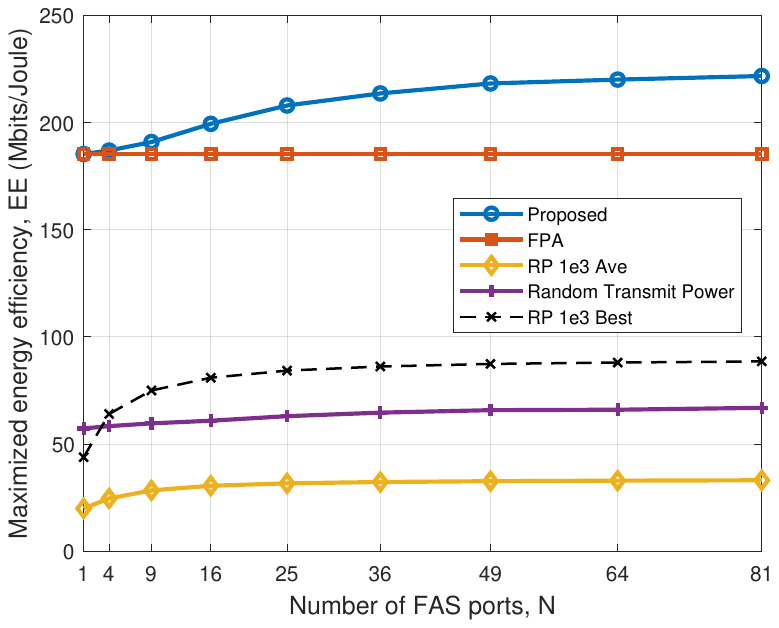}
            \label{N}
    \end{minipage}}
     \vspace{-0.4em}
    \hspace{-3mm}
    \subfigure[]{
        \begin{minipage}{0.33\textwidth}
            \centering
            \includegraphics[width=1\textwidth]{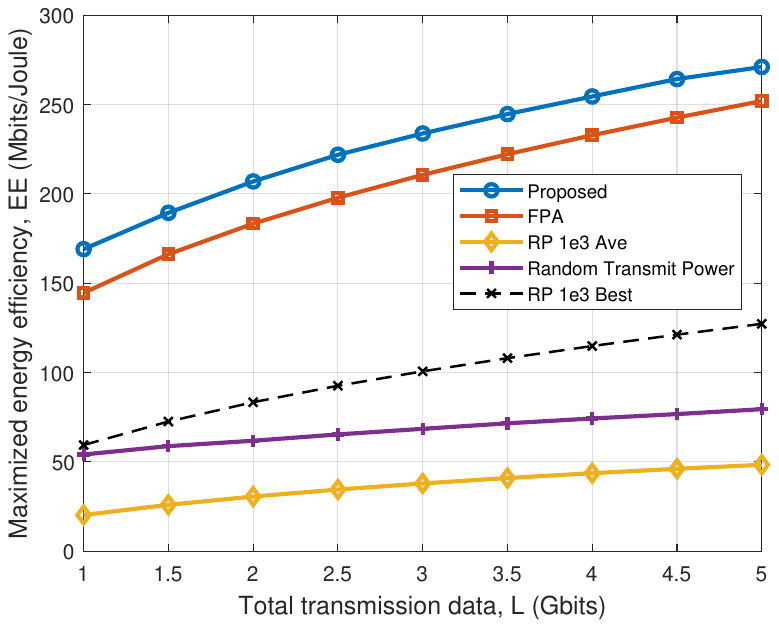}
            \label{dataSize}	
    \end{minipage}}
        \hspace{-5mm}
        \subfigure[]{
            \begin{minipage}{0.33\textwidth}
                \centering
                \includegraphics[width=1\textwidth]{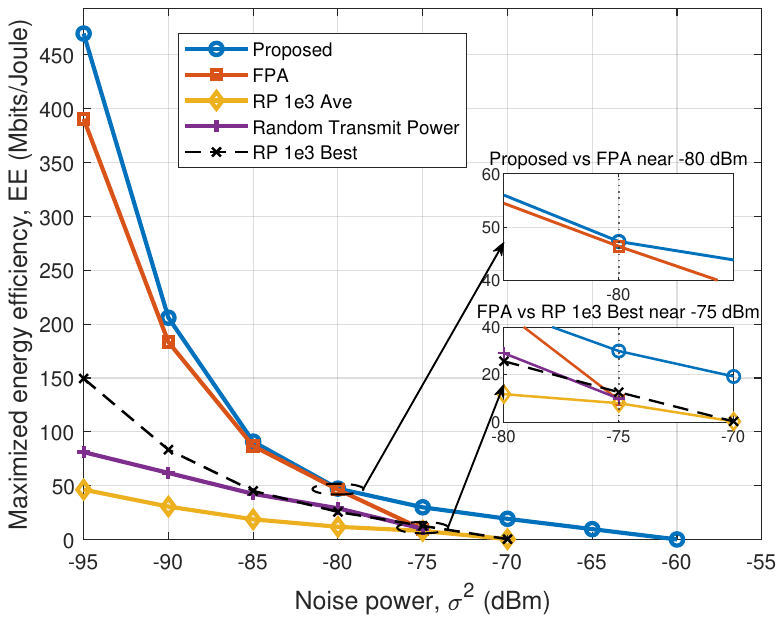}
                \label{NoisePower}	
        \end{minipage}}
            \vspace{-0.4em}
    \caption{Energy efficiency of the system versus the (a) Number of ports $N$, (b) Total transmission data $L$, (c) Noise power $\sigma^2$ .}
    \label{EE Performance} 
    \vspace{-1em}
\end{figure*}

In our simulations, we assume that there are $4$ pairs of BS-agents and $M=200\times200$ candidate positions. Without loss of generality, we assume that the FAS of each BS has the same structure parameters, and specifically, we set $N_1=\cdots=N_K=N=4$ and $W_1=\cdots=W_K=W=1$. Besides, we set $P_1^{\mathrm{max}}=\cdots=P_K^{\mathrm{max}} = 30\ \mathrm{dBm}$, $\sigma^2=-90\ \mathrm{dBm}$, $B=3\ \mathrm{MHz}$, $N_{\mathrm{th}} = 10^5$, and $R_{\mathrm{min}} = 1\ \mathrm{Mbps}$. 
We launch Monte Carlo simulations to evaluate the EE performance of our proposed algorithm, labeled as \textbf{Proposed}, and compare it with benchmark schemes: \textbf{FPA}, which uses an FPA at the BS with optimizing transmit power and agent positions; \textbf{RP 1e3 Ave} and \textbf{RP 1e3 best} both randomly choose the agent positions $1000$ times with optimizing transmit power and FAS port selections, and calculate the average and best EE performance of them, respectively; \textbf{Random Transmit Power}, in which each BS transmits the signals using a random transmit power and optimizes agents positions and FAS port selections.

Fig.~\ref{N} illustrates the EE performance of the system versus the number of ports $N$ of each FAS. When $N=1$, the proposed scheme and FPA achieve nearly the same EE since no port-selection gain is available. As $N$ increases, the EE of the proposed scheme increases, while FPA remains unchanged because it uses a fixed antenna port. The improvement comes from the additional spatial degrees of freedom provided by FAS, which allow more favorable desired-link and interference conditions to be selected. The random-position baselines also benefit from increasing $N$, but remain below the proposed scheme because the agent positions are not optimized. The EE gain gradually saturates for large $N$, as the fixed FAS aperture leads to stronger correlation among increasingly dense ports and hence a diminishing port-selection gain.

Fig.~\ref{dataSize} shows the EE performance of the system versus the total transmission data $L$. The EE generally increases with $L$, and the proposed scheme maintains the highest performance over the considered range. When $L$ is relatively small, the mechanical energy required for movement accounts for a larger portion of the total energy consumption, limiting the benefit of agent movement. As $L$ increases, this fixed movement energy cost is amortized over more transmitted data, and the benefit of jointly optimizing the agent positions, FAS ports, and transmit powers becomes more pronounced. The reduced slope at large $L$ indicates that the EE gradually approaches saturation as the impact of the fixed movement energy on the overall EE decreases with increasing $L$.

Fig.~\ref{NoisePower} demonstrates the EE performance of the system versus the noise power $\sigma^2$. As the noise power increases, a higher transmit power is required to satisfy the minimum-rate constraint, making the problem more likely to become infeasible under the maximum transmit power constraint. The proposed scheme maintains the highest EE over the entire considered range. As highlighted in the zoom-in regions, around $-80\ \mathrm{dBm}$, performance of the proposed algorithm becomes close to that of FPA as the benefit of FAS port selection decreases under increased noise. When the noise power further increases, FPA fails to find feasible solutions beyond $-75\ \mathrm{dBm}$, whereas the benchmark schemes using FAS can still obtain feasible solutions at $-70\ \mathrm{dBm}$, showing that the additional spatial degrees of freedom provided by FAS enlarge the feasible operating range. In comparison, the proposed scheme remains feasible up to $-60\ \mathrm{dBm}$ and becomes infeasible only at higher noise levels, indicating that the proposed algorithm further extends the feasible operating range under severe noise conditions.

\section{Conclusion}

In this paper, we investigated the EE maximization problem for FAS-assisted downlink communication over interference channels in MEAN. By jointly considering the communication energy consumption and agent movement energy, we formulated an EE maximization problem that jointly optimizes the agent positions, FAS port selections, and transmit powers. To solve this problem, we first derived the optimal transmit power for given agent positions and FAS ports, and then developed an iterative optimization algorithm with adaptive FAS port selection and sequential agent position optimization. A low-complexity power update method was further developed to reduce the computational cost during position optimization. Simulation results demonstrated that the proposed scheme achieves higher EE than the considered benchmark schemes, validating the effectiveness of the proposed algorithm.

\begin{appendices}
    \section{Proof of Lemma 1}\label{proof1}
We define $\mathbf{Q}_{k \rightarrow{m}} = \mathbf{I} - \mathbf{F}_{k \rightarrow{m}}$. Then, we have 
\begin{equation}
    \mathbf{Q}_{k \rightarrow{m}} = \mathbf{Q} + \bm{\epsilon}_k \bm{\xi}_{k\rightarrow{m}}^T,
\end{equation}
where $\bm{\epsilon}_k$ denotes the $k$-th standard basis vector in $\mathbb{R}^K$. According to Sherman–Morrison identity\cite{sherman1950adjustment}, if $ \bm{\xi}_{k\rightarrow{m}}^T \mathbf{Q}^{-1} \bm{\epsilon}_k\neq -1$, the inverse of $\mathbf{Q}_{k\rightarrow{m}}$ can be given by
\begin{equation}
    \mathbf{Q}_{k\rightarrow{m}}^{-1} = \mathbf{Q}^{-1} - \frac{\mathbf{Q}^{-1}\bm{\epsilon}_k \bm{\xi}_{k\rightarrow{m}}^T \mathbf{Q}^{-1}}{1+ \bm{\xi}_{k\rightarrow{m}}^T \mathbf{Q}^{-1} \bm{\epsilon}_k}.
\end{equation}

As a result, the updated power control vector after the agent $k$ moving to position $\bm{\mu}_m$ can be calculated as
\begin{align}
\begin{aligned}
\nonumber
       &\mathbf{p}^\star_{k \rightarrow{m}} = \mathbf{Q}_{k\rightarrow{m}}^{-1}\mathbf{u}_{k\rightarrow{m}} \\
       &= \mathbf{Q}^{-1}\mathbf{u} + \delta_{k\rightarrow{m}} \mathbf{Q}^{-1}\bm{\epsilon}_k - \frac{\mathbf{Q}^{-1}\bm{\epsilon}_k \bm{\xi}_{k\rightarrow{m}}^T \mathbf{Q}^{-1}(\mathbf{u}+\delta_{k\rightarrow{m}} \bm{\epsilon}_k)}{1+\bm{\xi}_{k\rightarrow{m}^T \mathbf{Q}^{-1} \bm{\epsilon}_k}} \\
       &= \mathbf{p}^\star + \frac{\mathbf{Q}^{-1}\bm{\epsilon}_k (\delta_{k\rightarrow{m}} - \bm{\xi}_{k\rightarrow{m}^T} \mathbf{Q}^{-1}\mathbf{u})}{1+\bm{\xi}_{k\rightarrow{m}^T \mathbf{Q}^{-1} \bm{\epsilon}_k}}\\
       &= \mathbf{p}^\star + \frac{[\mathbf{Q}^{-1}]_{:,k} (\delta_{k\rightarrow{m}} - \bm{\xi}_{k\rightarrow{m}}^T \mathbf{p}^\star)}{1+\bm{\xi}_{k\rightarrow{m}}^T [\mathbf{Q}^{-1}]_{:,k} },
\end{aligned}
\end{align}
which proves the equality of \eqref{pStarUpdate}.

\end{appendices}

\bibliographystyle{IEEEtran}
\bibliography{MMM}

\end{document}